\date{}
\newtheorem{obs}{Observation}
\title{Listing All Spanning Trees in Halin Graphs - Sequential and Parallel view}
\author{K.Krishna Mohan Reddy, P.Renjith, N.Sadagopan} 
\institute{Indian Institute of Information Technology, Design and Manufacturing, Kancheepuram, India. \\
\email{\{coe11b014,coe14d002,sadagopan\}@iiitdm.ac.in}}
\begin{document}
\maketitle
\setlength{\intextsep}{6pt}
%\begin{spacing}{1.15}
\begin{abstract}
For a connected labelled graph $G$, a {\em spanning tree} $T$ is a connected and an acyclic subgraph that spans all vertices of $G$.  In this paper, we consider a classical combinatorial problem which is to list all spanning trees of $G$.  A Halin graph is a graph obtained from a tree with no degree two vertices and by joining all leaves with a cycle.  We present a sequential and parallel algorithm to enumerate all spanning trees in Halin graphs.  Our approach enumerates without repetitions and we make use of $O((2pd)^{p})$ processors for parallel algorithmics, where $d$ and $p$ are the depth, the number of leaves, respectively, of the Halin graph.  We also prove that the number of spanning trees in Halin graphs is $O((2pd)^{p})$.
\end{abstract}
\section{Introduction}
Enumeration of sets satisfying a specific property is an important combinatorial problem in the field of combinatorics and computing.  Popular ones are listing all spanning trees \cite{gabow,Ramesh,Matsui,Kapoor}, listing all minimal vertex separators \cite{Kloks}, enumerating maximal independent sets \cite{Shuji}, etc.   In \cite{Minty},  G.J.Minty  initiated the study of enumeration of spanning trees in general graphs as it finds applications in circuits and network systems.  Subsequently,  Read et al. \cite{read} and Shioura et al. \cite{Shioura} presented an algorithm to list all spanning trees in arbitrary graphs.  So far, algorithms for enumeration of spanning trees reported in the literature either follow backtracking approach or enumeration with the help of fundamental cycles.  Enumeration of trees with some structural constraints is reported in \cite{yamanaka,nakano}\\\\
As far as the bounds are concerned, Cayley \cite{Cayley} established that there are at most $n^{n-2}$ spanning trees on an $n$-vertex labelled graph and it is tight if the graph is complete.  Moon et al. have shown that there are $3^{\frac{n}{3}}$ maximal independent sets \cite{Meir} and T.Kloks et al. have established that there are $O(3^{\frac{n}{3}})$ minimal vertex separators \cite{Kloks}.  From the computing front, a natural question is to perform enumeration efficiently.  Since there are exponential number of feasible solutions in general graphs, any algorithm requires exponential effort to list all of them.  It is important to highlight the fact that since general graphs do not have nice combinatorial structure unlike special graph classes, some of the feasible solutions may be generated more than once during the enumeration process.  Therefore, a related problem is to perform enumeration without repetitions.\\\\
Since the number of spanning trees is exponential in the input size, any sequential algorithm incurs exponential effort to list all of them.  To speed up the enumeration, a natural alternative is to generate many feasible solutions in parallel.  Since modern day computers are equipped with multi-core processors, design of parallel algorithms not only speeds up the enumeration but also utilizes the underlying hardware resources effectively.  Having highlighted the importance of parallel algorithmics, in this paper, we shall investigate the enumeration of spanning trees in Halin graphs both from sequential and parallel perspectives. \\\\
Halin graphs (due to Halin \cite{Proskurowski}) are constructed from a tree with no degree two vertices and by joining all leaves  with a cycle.  
The objective of this paper is three folds.  The first one is to identify a graph class where enumeration of spanning trees can be done without repetitions.  The second one is to bound the number of spanning trees as a function of structural parameters rather than the input size.  The final one is to discover parallel algorithmics for enumeration so that many feasible solutions can be generated simultaneously.  It is important to note that Halin graphs have nice structural properties like $3$-connected, planar, Hamiltonian, and it is one of the popular subgraph of planar graphs which has been a candidate graph class for many classical problems such as Maximum Leaf Spanning tree, Steiner tree, etc.  Due to its planarity structure, it finds applications in VLSI design and computer networks \cite{Proskurowski}, and it has been an active graph class in the literature, see \cite{lou,lsc,mcf} for recent works on Halin graphs.
  Although combinatorial problems such as  planarity testing \cite{TarjanEPT}, bipartiteness testing \cite{Eppstein}, chordality testing \cite{Tarjan1985}, connectivity augmentation \cite{Frank,TarjanVishkin,chen}, etc., have received attention in parallel algorithmics, enumeration of sets satisfying some structural property have not received much attention in the past. \\\\
   To our best knowledge, this paper makes the first attempt in generating all spanning trees in Halin graphs without repetitions.  We exploit the combinatorial structure of Halin graphs in great detail and present a sequential and parallel algorithm for listing all spanning trees without repetitions.  Enumeration without repetition is achieved due to a nice structure of Halin graphs which is unlikely to exist in general graphs.   Our sequential algorithm uses a coloring technique to color some of the edges which are likely to create repetitions during enumeration process.  The overall structure of sequential algorithm naturally yields a parallel algorithm.  We also establish a bound on the number of spanning trees in Halin graphs which is $O((2pd)^{p})$ and this helps to fix the number of processors for parallel algorithmics.  Our sequential approach is incremental in nature and incurs a polynomial-time delay between successive spanning trees. \\\\
Combinatorial problems such as Hamiltonicity, Max-leaf spanning tree have polynomial-time algorithms when the input is restricted to Halin graphs \cite{lou} which are NP-complete in general graphs.  Since Halin graphs possess nice structural properties, it has been a candidate graph class to understand the gap between NP-completeness and polynomial-time solvability for combinatorial problems which are NP-complete on planar graphs.    In this paper, we exploit the combinatorial structure and perform enumeration without repetitions.  We believe that this is a major contribution as other enumeration algorithms reported in the literature does enumerate with repetitions. 
\subsection{Graph preliminaries}
For notations and definitions we follow \cite{west,golumbic}.  Let $G(V,E)$ be an undirected connected graph where $V(G)$ is the set of vertices and $E(G) \subseteq \{uv:u,v \in V(G)$, $u\neq v\}$.  \emph{Neighborhood} of a vertex, $N_{G}(w)=\{x:wx\in E(G)\}$ and $d_{G}(w)=|N_{G}(w)|$.  A vertex $w$ in a tree $T$ is a \emph{leaf} vertex if $d_{T}(w)=1$.  A Halin graph $H=T\cup C$ is constructed using a tree $T$ with no verex of degree two, and by connecting all leaves with a cycle $C$.  $T$ is termed as \emph{characteristic tree (Base tree)} and $C$ is termed as the \emph{accompanying cycle}.   Assuming $T$ is rooted, the \emph{depth} $d$ of $H$ is the depth of $T$ which is the length of the longest path from root to a leaf in $T$.  A Halin graph, its characteristic tree, and accompanying cycle are shown in Figure \ref{Halineg}.  For a Halin graph $H$, $C=(e_1,\ldots, e_p)$, $p\geq 3$ denotes an ordering of $p$ edges in the accompanying cycle $C$ such that for every $1\leq i\leq p-1$, $e_i=v_iv_{i+1}$ and $e_{p}=v_1v_p$.  $P_{uv}$ represents a path from $u$ to $v$.  We sometimes use $P_{uv}$ to represent $V(P_{uv})$, if the context is unambiguous.
% A Halin graph $H$ is a 3-connected planar graph. In $H$ there exist a spanning, rooted tree $T$ termed as \emph{characteristic tree}, where $d_{T}(v)\neq 2$, $v\in V(T)$.  Leaf vertices, $V(C)=\{v~|~d_{T}(v)=1\}$ induces a cycle $C$, termed as \emph{accompanying cycle} of $H$.  We represent $H=T\cup C$ where $T$ denote characteristic tree of $H$ and $C$, its accompanying cycle.
%  whose vertex set can be partitioned into two. i,e., $V(H)=V(T)\cup V(C)$, where $T$ is a rooted tree termed as \emph{characteristic tree} where $d_{T}(v)\neq 2$, $v\in V(T)$ and $C$ is called the \emph{accompanying cycle} of $H$.  
Spanning trees $T_1$ and $T_2$ of a graph $G$ are said to be equal, if $E(T_1)=E(T_2)$ and denoted as $T_1=T_2$.  Graph $H$ is said to be an induced subgraph of $G$ if $V(H)\subseteq V(G)$, $E(H)=\{uv:u,v\in V(H)$ and $uv\in E(G)\}$ and is denoted as $H\sqsubseteq G$.
\begin{figure}[!h]
\begin{center}
\includegraphics[scale=1.5]{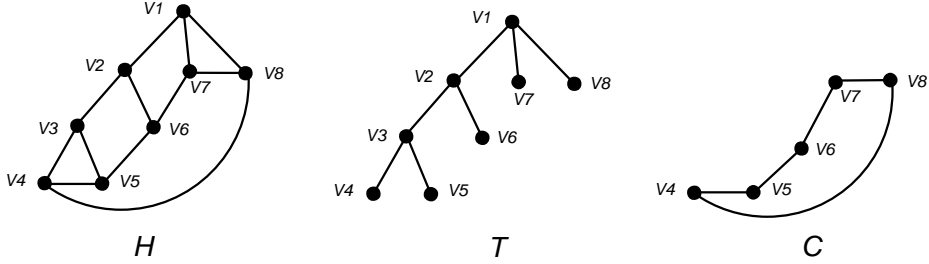}  
\caption{Halin Graph} \label{Halineg}
\end{center}
%\vspace{-0.5cm}
\end{figure}
\subsection{Parallel-algorithmic preliminaries}
In this paper, we work with Parallel Random Access Machine (PRAM) Model \cite{jaja}.  It consists of a set of $n$ processors all connected to a shared memory.  The time complexity of a parallel algorithm is measured using the processor-time product $($number of processors $\times$ time for each processor$)$.  Access policy must be enforced when two processors are trying to Read/Write into a cell. This can be resolved using one of the following strategies:
\begin{itemize}
\item Exclusive Read and Exclusive Write (EREW): Only one processor is allowed to read/write into a cell
\item Concurrent Read and Exclusive Write (CREW): More than one processor can read a cell but only one is allowed to write at a time
\item Concurrent Read and Concurrent Write (CRCW): All processors can read and write into a cell at a time.
\end{itemize}
In our work, we restrict our attention to CRCW PRAM model.  For a problem $Q$ with input size $N$ and $p$ processors,  the speed-up is defined as $S_p(N)= \frac{T_1(N)}{T_p(N)}$, where $T_p(N)$ is the time taken by the parallel algorithm on a problem size $N$ with $p$ ($p\geq 2$) processors and $T_1(N)$ is the time taken by the fastest sequential algorithm (in this case $p=1$) to solve $Q$.  The efficiency is defined as $E_p(N)=\frac{S_p(N)}{p}$.
\section{Listing all spanning trees in Halin graphs: A sequential approach}
In this section, we shall first present a sequential algorithm to enumerate all spanning trees in Halin graphs.  The sequential algorithm presented here is iterative in nature, and with the help of the base tree $T$ and the accompanying cycle $C$, we systematically generate all spanning trees which are stored in the set $ENUM$.  Further, we also present a bound on the number of spanning trees in Halin graphs, using its structural parameters. 
\subsection{Enumeration Algorithm}
The algorithm is simple, which starts the enumeration with the base tree $T$, further the algorithm iteratively adds an edge in $C$ to $T$, which creates a cycle $C^*$.  Spanning trees are enumerated by removing the edges in $E(C^*)\backslash E(C)$ one at a time to obtain other spanning trees.  
\begin{algorithm} 
\caption{Sequential algorithm to list all spanning trees of a Halin graph\newline {\em sequential-list-spanning-trees(H)}} \label{alg1}
\begin{algorithmic}[1]
\STATEx{{\tt Input:} A Halin Graph $H$}
\STATEx{{\tt Output:} All spanning trees of $H$. }
\STATEx{{\tt /* The set $ENUM$ contains all spanning trees of $H$, and $T$ is the characteristic tree of $H$. */ }}
\STATE{ Initialize $ENUM=\{T\}$.}
\STATE{$ \sigma $=$(e_1,\ldots,e_p)$  be an ordering of edges in C.}
\FOR{$i$ $=$ $1$ to $p$}
\STATE{  {\tt sequential-recursive-list(T,$e_i$)}.} \label{alg2call}
\ENDFOR
\end{algorithmic}
\end{algorithm}   
\begin{algorithm}
\caption{Sequential recursive listing of spanning trees in Halin graphs
\newline {\em sequential-recursive-list($T^{'}$,$e_i$)}}  \label{alg2}
\begin{algorithmic}[1]
\STATE{$G^*\leftarrow$ $T^{'}$+ $e_i$.}  
\STATE{Let $C^{*}$ be the unique cycle in $G^{*}$ containing $e_i$ .}
\STATE{$ \sigma^{*} $=$(b_1,\ldots,b_k)$  be an ordering of edges in $E(C^{*})\backslash E(C)$.}
\FOR{$m$ $=$ $1$ to $k$}
\STATE{$T^{*}\leftarrow $ $G^{*}-$ $b_m$.  Update	ENUM = ENUM $\cup$ $T^{*}$.} \label{addTei_enum}
\FOR{$j$ $=$ $i+1$ to $p$} 
\STATE{ {\tt sequential-recursive-list($T^{*}$,$e_j$)}.}
\ENDFOR
\ENDFOR
\end{algorithmic}
\end{algorithm}
\begin{figure}[!thb]
\begin{center}
\includegraphics[scale=0.95]{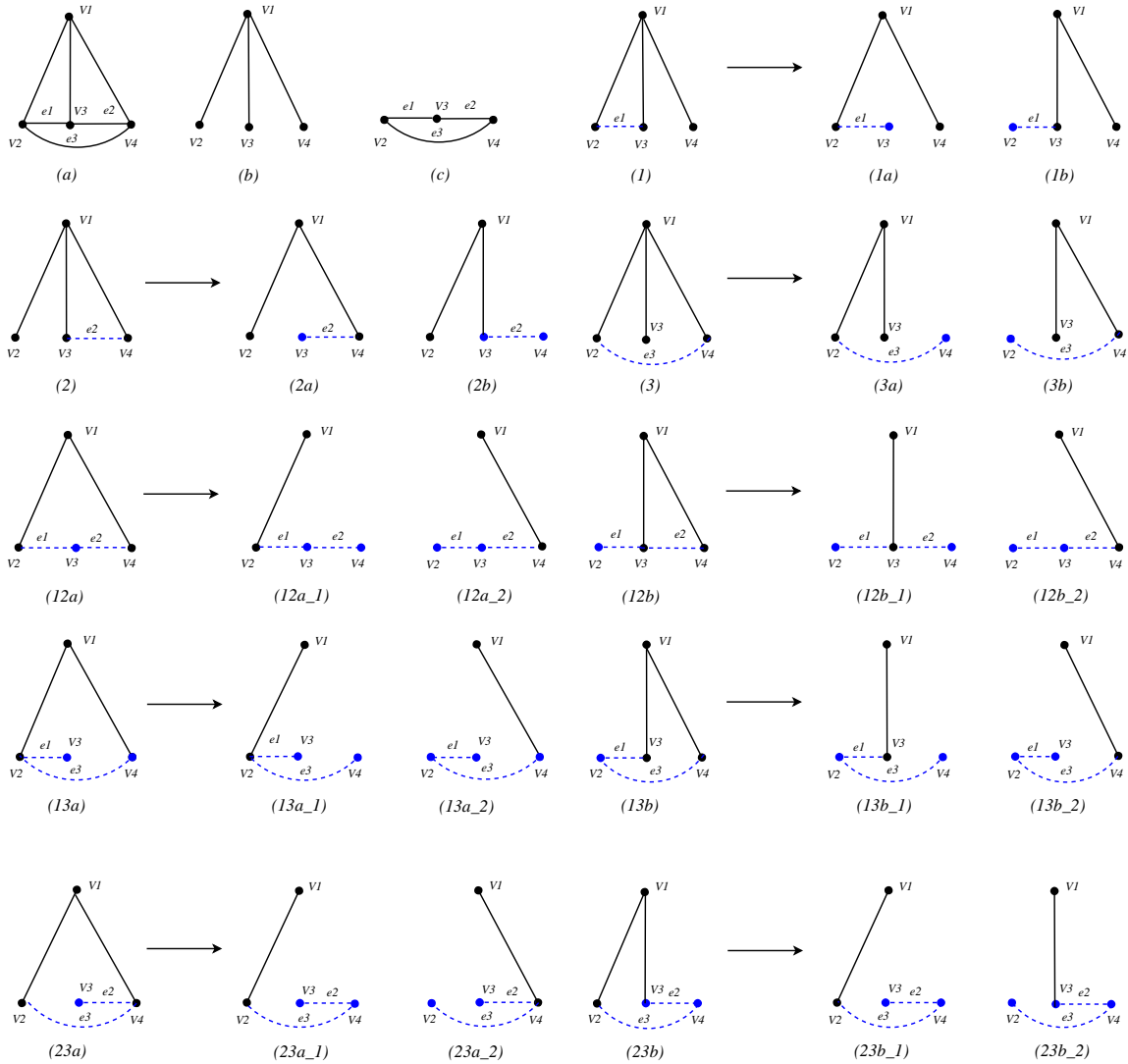}  
\caption{Trace Of Algorithm 1} \label{trace}
\end{center}
\end{figure}
\subsection{ Trace of Algorithm 1}
Figure \ref{trace}$.(a)-(c)$ shows a Halin Graph $H$ on 4 vertices, its characteristic tree $T$ and the accompanying cycle $C$, respectively.  Step $4$ of Algorithm $1$ calls Algorithm $2$ with parameters $(T$,$e_i),1\leq i\leq 3$ which adds edges $e_1,e_2,$ and $e_3$ to $T$ and yields new spanning trees shown in sub Figures $(1a)$-$(1b)$, $(2a)$-$(2b)$, and $(3a)$-$(3b)$, respectively.  Note that in Figure \ref{trace}, sub Figures $(1a)$ and $(1b)$ denote spanning trees containing $e_1$, which we denote using the set $T_{e_1}$.  Algorithm $2$ recursively adds edges $e_2$ and $e_3$ to each spanning tree in $T_{e_1}$ and generate the sets $T_{e_1,e_2}$, and $T_{e_1,e_3}$, respectively.   $T_{e_1,e_2}$ is the set of spanning trees containing $e_1$ and $e_2$ which are shown in sub Figures (12a\_1), (12a\_2), (12b\_1), (12b\_2).  Similarly, $T_{e_1,e_3}$ and $T_{e_2,e_3}$ are illustrated in Figure \ref{trace}.  Note that the spanning trees generated by the algorithm are not unique as (12a\_2) and (12b\_2) are identical copies of the same spanning tree.
%Similarly, addition of $e_2, e_3$ creates $\{(2a), (2b)\}$, $\{(3a), (3b)\}$, respectively.  Note that in Figure \ref{trace}, $T_{e1}=\{(1a), (1b)\}$, $T_{e2}=\{(2a), (2b)\}$, $T_{e3}=\{(3a), (3b)\}$.  Algorithm $2$ recursively adds $e_2$ and $e_3$ to every new spanning trees in $T_{e1}$ and generates another set of spanning trees.  $T_{e1e2}=\{(12a\_1), (12a\_2), (12b\_1), (12b\_2)\}$, $T_{e1e3}=\{(13a\_1), (13a\_2), (13b\_1), (13b\_2)\}$, and $T_{e2e3}=\{(23a\_1), (23a\_2), (23b\_1), (23b\_2)\}$ are also shown in Figure \ref{trace}.
%
\subsection{Proof of Correctness}
\begin{theorem}
For a Halin Graph $H$, Algorithm 1 enumerates all spanning trees of $H$.
\end{theorem}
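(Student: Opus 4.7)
\medskip

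The plan is to prove two things: soundness (every element added to $ENUM$ is indeed a spanning tree of $H$) and completeness (every spanning tree of $H$ eventually lands in $ENUM$). Soundness is brief and essentially mechanical: the seed $T$ is a spanning tree by construction, and in Algorithm~2, starting from a spanning tree $T'$ and forming $G^{*}=T'+e_{i}$ creates exactly one cycle $C^{*}$ (the fundamental cycle of the chord $e_{i}$ with respect to $T'$); deleting any edge $b_{m}$ of $C^{*}$ is a standard tree--cycle exchange and yields another spanning tree, which is what gets stored on line~\ref{addTei_enum}. So the burden of proof lies in completeness.

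For completeness, fix an arbitrary spanning tree $T'$ of $H$ and let $E(T')\cap E(C)=\{e_{i_{1}},\ldots,e_{i_{s}}\}$ with $i_{1}<i_{2}<\cdots<i_{s}$. If $s=0$, then $T'\subseteq E(T)$; since $|E(T')|=|E(T)|=n-1$, we have $T'=T$, which is inserted into $ENUM$ on line~1 of Algorithm~1. For $s\geq 1$, I plan to construct, by induction on $j$, a chain of spanning trees $T_{0}=T,\,T_{1},\,\ldots,\,T_{s}=T'$ such that $T_{j}=T_{j-1}+e_{i_{j}}-f_{j}$ for some $f_{j}\in E(T_{j-1})\setminus E(T')$. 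The existence of $f_{j}$ is exactly the basis-exchange axiom for the graphic matroid: since the inductive invariant forces $E(T_{j-1})\cap E(C)=\{e_{i_{1}},\ldots,e_{i_{j-1}}\}$, we have $e_{i_{j}}\in T'\setminus T_{j-1}$, so there is $f_{j}\in T_{j-1}\setminus T'$ with $T_{j-1}+e_{i_{j}}-f_{j}$ a spanning tree. Equivalently, $f_{j}$ is any edge of the unique cycle $C^{*}_{j}$ of $T_{j-1}+e_{i_{j}}$ that lies outside $T'$; because the entire cycle cannot be contained in the acyclic $T'$, such an edge exists.

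The next step is to identify this chain with an actual computation path of the algorithm. Algorithm~1 invokes {\em sequential-recursive-list}$(T,e_{i_{1}})$ when its outer loop reaches index $i_{1}$. Inside Algorithm~2, the inner loop ranges over $E(C^{*}_{j})\setminus E(C)$ and the recursive call uses chord indices $j>i$ strictly greater than the current one. Two small verifications complete the alignment: (a) each $f_{j}$ chosen above lies in $E(C^{*}_{j})\setminus E(C)$, because $f_{j}\in E(T_{j-1})$ and $E(T_{j-1})\cap E(C)=\{e_{i_{1}},\ldots,e_{i_{j-1}}\}$ while $f_{j}\notin T'$ excludes it from this set, so $f_{j}\in E(T)$ and in particular $f_{j}\notin E(C)$; and (b) the chord ordering $i_{1}<i_{2}<\cdots<i_{s}$ matches the $j=i+1$ to $p$ constraint of the inner loop, making each nested call legal. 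Therefore one particular branch of the recursion produces exactly $T_{1},T_{2},\ldots,T_{s}$, and at depth $s$ the tree $T_{s}$ is added to $ENUM$ on line~\ref{addTei_enum}. Finally, $T_{s}$ and $T'$ contain the same cycle edges $\{e_{i_{1}},\ldots,e_{i_{s}}\}$ and omit the same set of base-tree edges $\{f_{1},\ldots,f_{s}\}$, so $E(T_{s})=E(T')$ and $T_{s}=T'$.

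The main obstacle I expect is not any single deep fact but the bookkeeping needed to synchronize the inductive construction with the recursion: one has to make sure the invariant $E(T_{j-1})\cap E(C)=\{e_{i_{1}},\ldots,e_{i_{j-1}}\}$ really is preserved (which hinges on $E(T)\cap E(C)=\emptyset$ in a Halin graph, since the base tree edges and the accompanying cycle edges are disjoint), and that the exchange partner $f_{j}$ is always available within the edge set the inner loop considers. Once these invariants are phrased cleanly, the induction itself is short and the theorem follows.
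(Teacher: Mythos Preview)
Your proof is correct and follows essentially the same strategy as the paper's: induction on the number of accompanying-cycle edges in the target spanning tree, showing that each such tree is reached from one with one fewer cycle edge via a single edge exchange matching a step of Algorithm~2. Your version is more rigorous---you explicitly invoke matroid basis exchange, verify that the removed edge $f_{j}$ lies in $E(C^{*}_{j})\setminus E(C)$, and confirm $T_{s}=T'$ by a counting argument---whereas the paper's inductive step asserts these points without spelling them out.
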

\begin{proof}
%We present the proof of correctness of Algorithm 1 using minimum counter example. Let $R$ represents the set of spanning trees of $H$ which are not enumerated by Algorithm 1 and
Let $T^{'}$ be an arbitrary spanning tree of $H$.  We show that $T^{'}$ is generated by Algorithm \ref{alg1}.  If $T^{'}$ is the characteristic tree of $H$, then we are done.  If $E(T^{'})\cap E(C)\neq\emptyset$, then we show by induction on $n=|E(T^{'})\cap E(C)|,n\geq1$ that $T^{'}$ is generated by Algorithm \ref{alg1}.  \\
\emph{Base case:} $n=1$.  Our algorithm adds the edge $e_i$, $1\leq i\leq p$ to $T$, which creates a cycle $C^*$.  The algorithm then removes every edge in $C^*$ except $e_i$ of $C$.  This enumerates all spanning trees containing the edge $e_i$, $1\leq i\leq p$.  Therefore, Level 1 of the computational tree has all spanning trees having exactly one accompanying cycle edge.  Thus, our claim is true for the base case. \\
\emph{Induction Hypothesis:} Let us assume that Algorithm \ref{alg1} generates all spanning trees with less than $n$,$n\geq 2$ accompanying cycle edges.  That is, for every $i<n$, in the computational tree, all spanning trees containing exactly $i$ accompanying cycle edges are generated by our algorithm in Level $i$.\\
\emph{Induction Step:} Let $T^{'}$ be an arbitrary spanning tree on $n\geq 2$ accompanying cycle edges such that $E(T^{'})\cap E(C)$=$\{e_i,e_j,\ldots,e_k,e_m\}$  where $i<j<\ldots<k<m$.  By our Induction Hypothesis, the set $T_{e_i,e_j,\ldots,e_k}$ has less than $n, n\geq 2$ accompanying cycle edges.  For each spanning tree in $T_{e_i,e_j,\ldots,e_k}$, Algorithm 2 adds $e_{f}$, $k+1\leq f\leq p$ and generates new spanning trees using the cycles created due to this addition.  Clearly, in this process, Algorithm 2 adds $e_m$, $m>k$ and generate $T^{'}$.  The induction is complete and therefore, the theorem. \hfill \qed
\end{proof}
\subsection{Run-time Analysis}
Let $t$ be the total number of spanning trees possible for a given Halin graph.  Algorithm 1 adds the accompanying cycle edges one after the other to the characteristic tree of $H$.  After adding accompanying cycle edges one after another, to a given spanning tree, Algorithm 2 incurs, $O(n)$ time to detect the cycle formed, using the standard Breadth First Search algorithm.  Ordering of edges in $E(C^{*})\backslash E(C)$, $\sigma^*$ can be found in linear time, from which each edge can be removed in constant time to output a spanning tree.  So the total time taken by our sequential algorithm is $O(n t)$ which is $O(n(2pd)^{p})$.  Also notice that the algorithm incurs linear time delay between generation of two successive spanning trees.
\subsection{A bound on the depth of the Characteristic Tree}
\begin{theorem}
Let $H$ be a Halin Graph on $n$ vertices with $d$ being the depth of the characteristic tree of $H$.  Then, $d\leq \lfloor\frac{n}{2}\rfloor-1$.
\end{theorem}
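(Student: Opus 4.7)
The plan is to pick a longest root-to-leaf path $P$ in the characteristic tree $T$ and use the absence of degree-two vertices to force at least $d+1$ additional vertices outside $V(P)$, yielding $n \geq 2d+2$ and hence the claimed bound. Write $P\colon v_0,v_1,\ldots,v_d$, where $v_0$ is the root (which I take to be an internal vertex, following the standard convention for rooting a Halin characteristic tree) and $v_d$ is a leaf; then $|V(P)|=d+1$.

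Next I would produce the off-path vertices. For each interior index $1\leq i\leq d-1$, vertex $v_i$ has at most two neighbors on $P$ (namely $v_{i-1}$ and $v_{i+1}$, since any other $P$-neighbor would close a cycle in $T$); and since $T$ has no degree-two vertex, $d_T(v_i)\geq 3$, so $v_i$ contributes at least one neighbor $w_i\in V(T)\setminus V(P)$. The root $v_0$ is internal with $d_T(v_0)\geq 3$ and has at most one neighbor on $P$ (namely $v_1$), so it contributes two further neighbors $w_0,w_0'\in V(T)\setminus V(P)$. I then verify that the $d+1$ vertices $w_0,w_0',w_1,\ldots,w_{d-1}$ are pairwise distinct: any coincidence $w_i=w_j$ with $i\neq j$, together with the $v_i$-to-$v_j$ subpath of $P$, would close a cycle in $T$, contradicting acyclicity. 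They are also disjoint from $V(P)$ by construction.

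Combining the two disjoint families gives $n\geq (d+1)+(d+1)=2d+2$, hence $d\leq (n-2)/2$; since $d$ is an integer, this is equivalent to $d\leq \lfloor n/2\rfloor-1$ in both even and odd parities of $n$, which is the desired inequality. The main obstacle — and the only delicate point — is extracting the full $+2$ contribution from the root rather than merely $+1$: the weaker bound $n\geq 2d+1$ only yields $d\leq \lfloor n/2\rfloor$, which is off by one for odd $n$. This is precisely where the hypothesis that the root itself has degree at least three (i.e., that the tree is rooted at an internal vertex) is used essentially; for instance, rooting $K_{1,3}$ at one of its leaves would give $d=2$ while $\lfloor n/2\rfloor-1=1$, showing that this convention on the rooting cannot be dropped.
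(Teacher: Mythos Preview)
Your proof is correct and takes a genuinely different route from the paper's. The paper argues by induction on the depth $d$: in the inductive step it deletes all leaves at depth $d$ (there are at least two, since the parent of any such leaf sits at depth $d-1$, has degree at least three, and all of its children lie at depth $d$), observes that the resulting tree $T^{*}$ is again a characteristic tree now of depth $d-1$, and applies the induction hypothesis to $T^{*}$ to obtain $d-1\le\lfloor (n-k)/2\rfloor-1\le\lfloor n/2\rfloor-2$. Your approach instead fixes a single longest root-to-leaf path and counts off-path neighbours directly, obtaining $n\ge 2d+2$ in one stroke.

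Your argument is the more self-contained of the two: it avoids having to verify that the pruned tree $T^{*}$ is again a legitimate characteristic tree (no degree-two vertices, enough leaves to support a Halin graph) for the induction to go through. The paper's inductive proof, in turn, sidesteps the explicit pairwise-distinctness check for your off-path vertices $w_0,w_0',w_1,\ldots,w_{d-1}$. Both proofs rely in exactly the same way on the convention that the root is an internal vertex; the paper invokes this implicitly in its base case (``when $d=1$, the degree of the root vertex $d_T(v)\ge 3$''), while you make the dependence explicit with the $K_{1,3}$ example.
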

\begin{proof}
We prove by induction on $d$, the depth of the characteristic tree $T$.\\  {\em Base case:} $d=1$. Clearly when $d=1$, the degree of the root vertex $d_{T}(v)\geq 3$.  Hence, $n\geq 4$ and $d\leq \lfloor\frac{n}{2}\rfloor-1$.  \\ {\em Induction Hypothesis:} Assume that every Halin graph on $n$-vertices with the depth $d\geq 1$ of its characteristic tree $T$ has $d\leq \lfloor\frac{n}{2}\rfloor-1$. \\ {\em Induction Step:} Consider a Halin graph $H=T\cup C$ with the depth of its characteristic tree $d$, $d\geq 2$.  Let $T^{*}$ be the tree obtained from $T$ by removing all leaves  at depth $d$ of $T$ and $H^{*}$ be the Halin graph obtained from $T^{*}$ by joining all leaves of $T^{*}$ with a cycle.  Now, the parent nodes of the removed leaf nodes become leaf nodes in $T^{*}$.  Let the number of vertices removed be $k$.  Clearly, $k\geq 2$ as for every internal vertex $v\in V(T)$, $d_{T}(v)\geq 3$.   Observe that, $T^{*}$ is the characteristic tree of the Halin graph $H^{*}$ with depth $d-1$, $d\geq 2$.  From the induction hypothesis, we have $d-1\leq\lfloor\frac{n-k}{2}\rfloor-1$.  This implies $d-1\leq\lfloor \frac{n}{2} \rfloor-1-1$  as $k\geq 2$.  Hence, $d\leq\lfloor\frac{n}{2}\rfloor-1$ and the theorem follows.  \hfill \qed 
\end{proof} 
%\begin{figure}
%\begin{center}
%\includegraphics[scale=0.3]{Base.eps}  
%\caption{Halin Characteristic Tree} \label{Halin}
%\end{center}
%\end{figure}
\subsection{A bound on the number of spanning trees in Halin graphs} 
We count the number of spanning trees of a Halin graph $H$ by constructing a computational tree of $H$ (see Figure \ref{ct}).  
\begin{figure}[!h] 
\begin{center} 
\includegraphics[scale=1]{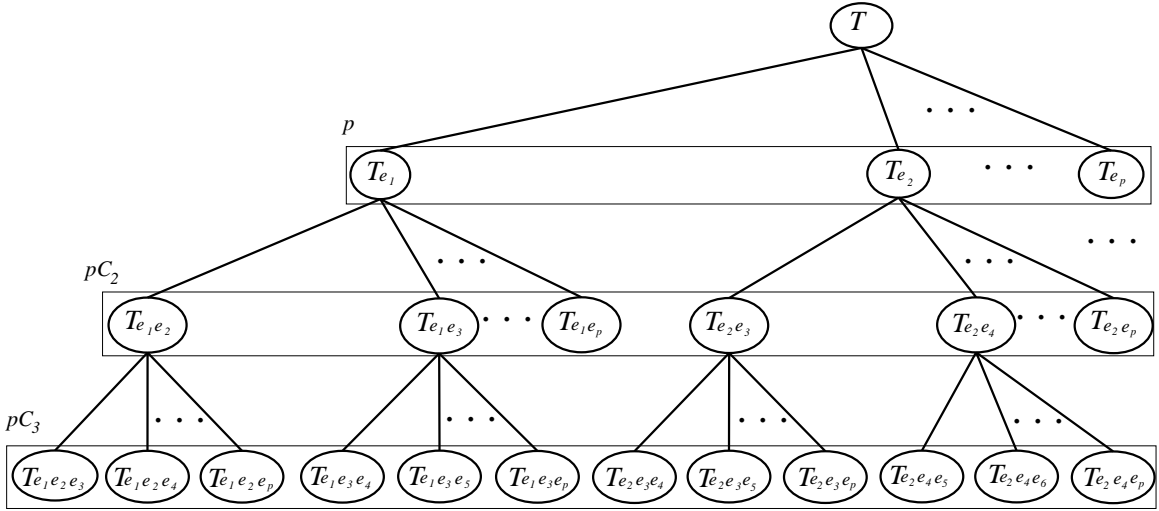}  
\caption{Computational Tree of H} \label{ct}
\end{center}
\end{figure}
At Level 1, the node $T_{e_i}$ represents a set that contains all spanning trees containing the edge $e_i\in C$.  At Level 3, any node with label $T_{e_i,e_j,e_k}$ represents a set containing all spanning trees that includes $e_i,e_j,e_k$ of $C$.  In general, at Level $l$, any node with label $T_{e_1,\ldots,e_i}$ represents a set containing all spanning trees that includes the edges $e_1,\ldots,e_i$ of $C$.  The next lemma gives an upper bound on the number of spanning trees in $T_{e_1,\ldots,e_i}$.  Note that ${e_1,\ldots,e_i}$ in $T_{e_1,\ldots,e_i}$ denotes some $i$ edges from $C$, need not be $i$ consecutive edges in $C$.
\begin{lemma} \label{maxsptreeati}
% The number of spanning trees in $i^{th}$ iteration of Algorithm \ref{alg1} is $(i!)\times(2d)^i)$
The number of spanning trees in $T_{e_1,\ldots,e_i}$ is $i!\cdot(2d)^i$.
\end{lemma}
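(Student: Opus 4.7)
The plan is to proceed by induction on $i$, tracking how Algorithm~\ref{alg2} grows the set $T_{e_1,\ldots,e_i}$ from $T_{e_1,\ldots,e_{i-1}}$. For the base case $i=1$, the set $T_{e_1}$ is obtained by adding $e_1$ to the characteristic tree $T$ and deleting, one at a time, each edge of $E(C^{*})\setminus E(C)$, where $C^{*}$ is the unique cycle so created. The two endpoints of $e_1$ are leaves of $T$, so the $T$-path joining them has length at most $2d$; hence $|E(C^{*})\setminus E(C)|\leq 2d$ and $|T_{e_1}|\leq 2d=1!\cdot(2d)^{1}$.

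For the inductive step I would assume $|T_{e_1,\ldots,e_{i-1}}|\leq (i-1)!\cdot(2d)^{i-1}$. Algorithm~\ref{alg2} constructs $T_{e_1,\ldots,e_i}$ by taking each $T^{*}\in T_{e_1,\ldots,e_{i-1}}$, forming the unique cycle $C_i^{*}$ in $T^{*}+e_i$, and deleting one edge of $E(C_i^{*})\setminus E(C)$ at a time, so $|T_{e_1,\ldots,e_i}|\leq |T_{e_1,\ldots,e_{i-1}}|\cdot \max_{T^{*}}|E(C_i^{*})\setminus E(C)|$. It therefore suffices to show $|E(C_i^{*})\setminus E(C)|\leq i\cdot 2d$; multiplication then yields $(i-1)!\cdot(2d)^{i-1}\cdot i\cdot 2d = i!\cdot(2d)^{i}$, closing the induction.

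The main obstacle is this structural estimate. Every iteration of Algorithm~\ref{alg2} adds one accompanying-cycle edge and removes one edge of $E(T)$, so every $T^{*}\in T_{e_1,\ldots,e_{i-1}}$ contains exactly the $i-1$ accompanying-cycle edges $e_1,\ldots,e_{i-1}$ and no others. The cycle $C_i^{*}$ consists of $e_i=v_iv_{i+1}$ together with the unique $v_i$--$v_{i+1}$ path $P$ in $T^{*}$, so $P$ uses at most $i-1$ edges of $C$. Because the endpoints of every accompanying-cycle edge, along with $v_i$ and $v_{i+1}$ themselves, are leaves of $T$, these at most $i-1$ cycle edges partition $P$ into at most $i$ maximal tree-edge subpaths, each of which is a path in the characteristic tree $T$ between two leaves. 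By Theorem on depth (equivalently, every leaf-to-leaf path in $T$ has length at most $2d$), each such subpath has length at most $2d$, so $P$ contains at most $i\cdot 2d$ tree edges; since any edge of $C_i^{*}$ outside $E(C)$ must lie in $E(T)$, this is exactly $|E(C_i^{*})\setminus E(C)|$, completing the induction.
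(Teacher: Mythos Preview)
Your proof is correct and follows essentially the same induction as the paper: a base case bounding $|T_{e_1}|$ by $2d$, an inductive hypothesis of $(i-1)!\cdot(2d)^{i-1}$, and the structural estimate that the new cycle contributes at most $i\cdot 2d$ non-$C$ edges. Your justification of that estimate is in fact more explicit than the paper's: you observe that the at most $i-1$ accompanying-cycle edges on the path $P$ split it into at most $i$ maximal $E(T)$-subpaths whose endpoints are leaves of $T$, hence each of length at most $2d$, whereas the paper simply asserts that between consecutive cycle edges ``the number of such edges is at most $2d$'' without spelling out the leaf-to-leaf reasoning.
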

\begin{proof}
We shall prove by induction on $i$. % the the cardinality of $\{e_1,\ldots,e_i\}$.  
By our algorithm,  when $e_i$ is added to $T$, it creates a cycle of length at most $2d+1$.  The algorithm removes every cycle edge in $C^*$ except $e_i$ of $C$.  Therefore, for the base case, each cycle (accompanying cycle) edge when added to $T$ creates $2d$ spanning trees.  Let us consider the $i^{th}$ iteration $i\ge 2$.  Clearly, the algorithm adds the $i^{th}$ edge $e_i$ to all spanning trees generated in the previous iteration, $T_{e_1,\ldots,e_{i-1}}$.  That is, by the induction hypothesis, there are $(i-1)!\cdot(2d)^{i-1}$, $i\ge2$ spanning trees in $T_{e_1,\ldots,e_{i-1}}$ and to each said spanning trees, the cycle edge $e_i$ is added.
% to which the cycle edge $e_i$ is added.  
Consider a tree $T'$ in $T_{e_1,\ldots,e_{i-1}}$.  For any two consecutive edges $e_j, e_{j+1}$, $1\leq j<i-1$ of ${e_1,\ldots,e_{i-1}}$, we now bound the number of tree edges between $e_j$ and $e_{j+1}$.  It is clear that the number of such edges is at most $2d$.
% Observe that at most $2d$ edges of $T$ could appear in between two consecutive edges $e_j, e_{j+1}$, $1\leq j<i-1$ of $T_{e_1,\ldots,e_{i-1}}$.  
%%% It follows that the length of the cycle on adding $e_i$ is at most $i.2d)$.  
When $e_i$ is added to a spanning tree in $T_{e_1,\ldots,e_{i-1}}$, the length of the cycle created is at most $i\cdot2d$.  This is true due to the above observation that there are $2d$ tree edges between any two consecutive cycle edges.
Therefore, the number of spanning trees generated in $T_{e_1,\ldots,e_i}$ is at most $(i-1)!\cdot(2d)^{i-1}\cdot i\cdot(2d)$ $=i!\cdot(2d)^{i}$.  This completes the proof. $\hfill\qed$ 
\end{proof}
%Consider the computational tree of $H$ as shown in Figure 3 and $e_1, e_2, \ldots, e_p$ represents the edges of accompanying cycle.  
% By our algorithm,  when $e_i$ is added to $T$, it creates a cycle of length at most $2d+1$.  The algorithm removes every edge in $C^*$ except $e_i$ of $C$.  
%%%    Thus there are at most $2d$ spanning trees containing the edge $e_i$ of $C$.  
\begin{theorem}
The number of spanning trees in any Halin graph is $O((2pd)^{p})$
\end{theorem}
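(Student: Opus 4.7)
The plan is to sum the per-node bounds of Lemma \ref{maxsptreeati} across all nodes of the computational tree in Figure \ref{ct}. By the correctness theorem proved earlier, every spanning tree of $H$ appears in at least one node of this tree, so the total number of distinct spanning trees is bounded above by the sum of the sizes of the sets stored at all the nodes; any repetitions of the kind illustrated in the trace only inflate this count, which is acceptable for an upper estimate.

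Next, I would count the nodes at each level. Because Algorithm 2 only recurses on accompanying-cycle indices $j$ strictly larger than the current $i$, every node at level $l$ is indexed by a strictly increasing sequence $i_1 < i_2 < \cdots < i_l$ drawn from $\{1,\ldots,p\}$. Hence level $l$ contains exactly $\binom{p}{l}$ nodes, and by Lemma \ref{maxsptreeati} each such node holds at most $l!\cdot(2d)^l$ spanning trees.

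Finally, I would carry out the arithmetic. Summing over levels yields
\[
\sum_{l=0}^{p}\binom{p}{l}\,l!\,(2d)^l \;=\; \sum_{l=0}^{p}\frac{p!}{(p-l)!}\,(2d)^l \;\le\; \sum_{l=0}^{p} p^l\,(2d)^l \;=\; \sum_{l=0}^{p}(2pd)^l,
\]
a geometric series dominated by its largest term $(2pd)^p$ (assuming $2pd \ge 2$, which holds since $p \ge 3$ and $d \ge 1$). This yields the claimed $O((2pd)^p)$ bound.

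The argument is essentially bookkeeping on top of Lemma \ref{maxsptreeati}; no step is genuinely hard. The only point to be careful about is not double-counting the levels or mis-indexing the computational tree — one must observe that the recursive call in Algorithm 2 forbids revisiting earlier indices, which is what makes the level-$l$ node count $\binom{p}{l}$ rather than $p^l$; fortunately, even the coarser bound $p^l$ is exactly what is absorbed by the $p^l$ appearing when we replace $p!/(p-l)!$, so the estimate is tight up to the $O(\cdot)$.
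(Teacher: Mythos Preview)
Your proposal is correct and follows essentially the same route as the paper: both count $\binom{p}{l}$ nodes at level $l$ of the computational tree, multiply by the per-node bound $l!\,(2d)^l$ from Lemma~\ref{maxsptreeati}, and sum over levels. Your final arithmetic, bounding $\frac{p!}{(p-l)!}\le p^l$ and summing a geometric series, is in fact more transparent than the paper's chain $\sum \binom{p}{i}\,i!\,(2d)^i \le (p-1)!\,(1+2d)^{p-1} \le (p-1)!\,(2d)^p$, and the observation that levels are indexed by increasing subsets of $\{1,\ldots,p\}$ is exactly how the paper arrives at $\binom{p}{l}$ as well.
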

\begin{proof}
Note that, in the computational tree of a Halin graph, there are $p$ nodes in Level 1.  Therefore, the number of spanning trees generated in Level 1 of the computational tree is at most $p\cdot2d$.  From Lemma \ref{maxsptreeati}, there exist $2!\cdot(2d)^{2}$ spanning trees in each node of Level 2 and there are $((p-1)+(p-2)+\ldots+1)$ nodes in Level 2.  Therefore, the number of spanning trees generated in Level 2 is at most $((p-1)+(p-2)+\ldots+1)\cdot2!\cdot(2d)^{2}$  $=\binom{p}{2}\cdot2!\cdot(2d)^{2}$.  There exist $\binom{p-1}{2}$ nodes in Level 3, which are descendents of $T_{e_1}$.  Similarly, there are $\binom{p-i}{2}$ nodes in Level 3, which are descendents of $T_{e_i}$, $1\leq i\leq p-2$.  Therefore, the number of nodes in Level 3 $=\binom{p-1}{2}+\binom{p-2}{2}+\ldots+\binom{2}{2}=\binom{p}{3}$.  In general, there are $\binom{p-i}{j}$ nodes, which are descendents of $T_{e_i}$ in Level $(j+1)$.  Hence there are $\Sigma_{i=1}^{p-j} \binom{p-i}{j}=\binom{p}{j+1}$ nodes in Level $(j+1)$.  Also from Lemma \ref{maxsptreeati}, there are at most $i!\cdot(2d)^{i}$ spanning trees in a node at Level $i$, $0\leq i\leq p-1$.  
%% Consider an arbitrary spanning tree $T^*\in T_{f_1,f_2,\ldots,f_i}$, such that $E(T^*)\cap E(C)=\{f_1,f_2,\ldots,f_i\}$, $i<p$.  Note that the edge $f_j$, $1\leq j\leq i-1$ occur before $f_{j+1}$ in $\sigma$.  Observe that there are at most $2d$ non-blue edges in between two consecutive edges $f_j, f_{j+1}$, $1\leq j\leq i-1$ of $T^*$.
%
In general, $i^{th}$ level has $\binom{p}{i}\cdot i!\cdot(2d)^{i}$ spanning trees.  Therefore, the total number of spanning trees is at most %%%  EXPANDED FORM  $$\binom{p}{1}(2d) + \binom{p}{2}(2!)(2d)^2 + \binom{p}{3}(3!)(2d)^3 + \ldots + \binom{p}{p-1}(p-1!)(2d)^{p-1} $$
$$ \sum\limits_{i=1}^{p-1} \binom{p}{i}\cdot i!\cdot(2d)^i \le (p-1)!\cdot(1+2d)^{p-1} \le (p-1)!\cdot(2d)^p =O((2pd)^{p}) $$ $\hfill \qed$
\end{proof}
% In the computational tree , T denotes the characteristic tree of H whereas T\_1 ,T\_2 ,\ldots , T\_l denotes the set of spanning trees generated on adding accompanying cycle edges $b1,b2,b3,\ldots,bl$ to T. Similarly $T\_12$ and $T\_123$ denotes the set of spanning trees generated on adding $b2$ and $b3$ to every spanning tree in $T\_1$ and $T\_12$ respectively. When we add any one of accompanying cycle edges $(b1,b2,\ldots,bl)$ to $T$ then $O(2d)$ new spanning trees will be generated. Therefore the number of spanning trees in Level 1 of computational tree are $l*O(2d)$. In the similar way the number of spanning trees generated on adding $b2,b3,\ldots,bl$ are $(l-1)*O((2d)^{2})$. So the number of spanning trees in Level 2 are $l_{c_{2}} * O((2d)^2)$. In Level 3 the number of spanning trees are $l_{c_{3}} * O((2d)^3)$.So the total number of spanning trees are $l_{c_{0}}+l*O(2d)+l_{c_{2}} * O((2d)^2)+l_{c_{3}} * O((2d)^3)+\ldots+ l_{c_{l}} * O((2d)^l)$ which equals to $O((1+2d)^{l})$.  
%
\section{Parallel algorithm: Listing all spanning trees in Halin graphs}
The overall structure of our sequential algorithm present in the previous section naturally gives a parallel algorithm to enumerate all spanning trees in Halin graphs.  Note that the length of the cycle $C'$ created due to the addition of edge $e\in C$ is $O(pd)$, and the last but one iteration of the algorithm generates at most $(p-2)!\cdot(2d)^{p-2}$ spanning trees.  Further, we know that the number of spanning trees generated is $O((2pd)^{p})$.  Using these facts, we have fixed the number of processors to be $O((2pd)^p)$ and our implementation is based on CRCW PRAM.  %  Each edge in $C'$ is assigned to a processor which inturn generates the subsequent spanning trees. 
%
% Algorithm \ref{alg3} makes use of $p+1$ processors, $P_0,\ldots,P_p$.  The first one, $P_0$ appends $T$ to $ENUM$.  Further, $P_0$ initiates Algorithm \ref{alg4} in $P_i$ with parameters $(T^{'}=T,e_i)$, $1\le i\le p$.  Each processor $P_i$ appends the edge $e_i$ and finds the unique cycle $C^*$ containing the newly added edge, and $\sigma^*=C^*\backslash C$.  For every edge $e\in T\cap C^*$, and for every $i+1\le j\le p$, Algorithm \ref{alg4} is spawned on a new processor with parameters $(T^{'}-e, e_j)$. 
%
\\\\ 
\begin{algorithm}[!h] 
\caption{Parallel algorithm to list all spanning trees of a Halin graph \newline {\em parallel-list-spanning-trees(H)}} 
\label{alg3}
\begin{algorithmic}[1]
\STATEx{{\tt Input:} A Halin Graph $H$}
\STATEx{{\tt Output:} All spanning trees of $H$. }
\STATEx{\tt /* The set $ENUM$ contains all spanning trees of $H$ and $T$ is the characteristic tree of $H$.   */ }
%\STATE{ Use $O(d^l)$ processors to output all spanning trees.Let $P_0$ be the master processor coordinating the entire enumeration process.}
\STATE{ Initialize  \emph{ENUM} = $\{T\}$.}
\STATE{$ \sigma $=$(e_1,\ldots,e_p)$  be an ordering of edges in C.}
\STATE{\textbf{cobegin}}
\FOR{$i$ $=$ $1$ to $p$}
\STATE{Assign $(T,e_i)$ to a new distinct processor $P$.}
\STATE{$P$ calls {\tt parallel-recursive-list(T,$e_i$)}.}
\ENDFOR
\STATE{\textbf{coend}}
\end{algorithmic}
\end{algorithm}
\begin{algorithm}[!h]  
\caption{Parallel recursive listing of spanning trees in Halin graphs \newline {\em parallel-recursive-list($T^{'}$,$e_i$)}}
\label{alg4}
\begin{algorithmic}[1] % \label{alg4}
\STATE{$G^*\leftarrow$ $T^{'}$+ $e_i$.}  
\STATE{Let $C^{*}$ be the unique cycle containing $e_i$ .}
\STATE{$ \sigma^{*} $=$(b_1,\ldots,b_k)$  be an ordering of edges in $E(C^{*})\backslash E(C)$.}
\STATE{\textbf{cobegin}}
\FOR{$m$ $=$ $1$ to $k$}
\STATE{$T^{*}\leftarrow $ $G^{*}-$ $b_m$. \emph{ENUM} = \emph{ENUM} $\cup$ $T^{*}$.}
\STATE{\textbf{cobegin}}
\FOR{$j$ $=$ $i+1$ to $p$} 
\STATE{Assign $(T^{*},e_j)$ to a new distinct processor P. i.e., execute {\tt parallel-recursive-list($T^{*}$,$e_j$)}.}
%\STATE{ P executes {\tt parallel-recursive-enum($T^{*}$,$b_j$)}.}
\ENDFOR
\STATE{\textbf{coend}}
\ENDFOR
\STATE{\textbf{coend}}
\end{algorithmic}
\end{algorithm}
\subsection{Proof of Correctness}
\begin{theorem}
For a Halin Graph $H,$ Algorithm 3 enumerates all spanning trees of $H$.
\end{theorem}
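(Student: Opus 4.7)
The plan is to reduce correctness of Algorithm 3 to the already-established correctness of Algorithm 1, by showing that the two algorithms execute precisely the same collection of recursive invocations and emit the same candidate spanning trees into $ENUM$; the only distinction is scheduling. First I would formalize the \emph{computation tree} common to both algorithms: its root has $p$ children, one per call $(T,e_i)$ launched by the outer loop; each internal node $(T',e_i)$ has one child per pair $(T^{*},e_j)$ produced by the inner loop of the recursive routine (after forming $G^{*}=T'+e_i$, isolating $C^{*}$, and removing each $b_m\in E(C^{*})\setminus E(C)$); and each node contributes the tree $T^{*}=G^{*}-b_m$ to $ENUM$. Algorithm 1 traverses this tree depth-first in a single thread, whereas Algorithm 3 explores it in parallel by dispatching one processor per child at every \textbf{cobegin}/\textbf{coend}.

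I would then establish, by induction on the depth $\ell$ of the computation tree, that every node is actually executed by Algorithm 3 and that the spanning tree associated with it is written to $ENUM$. The base case ($\ell=0$) is immediate: the top-level \textbf{cobegin} in Algorithm 3 allocates a distinct processor for each $(T,e_i)$, $1\le i\le p$, so all $p$ root children fire. For the inductive step, assume every node at depth $\le \ell$ has executed; any node $(T',e_i)$ at depth $\ell$ reaches the inner \textbf{cobegin} of Algorithm 4 and spawns one fresh processor per $(T^{*},e_j)$, $i+1\le j\le p$, so every depth-$(\ell+1)$ invocation occurs and its $T^{*}$ is added to $ENUM$. Combining this with the theorem for Algorithm 1, which guarantees that every spanning tree of $H$ appears at some node of this computation tree, yields the desired enumeration.

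The main obstacle will be reasoning cleanly about the concurrent updates to the shared set $ENUM$ in line 6 of Algorithm 4, since many processors may simultaneously attempt to insert their $T^{*}$. Under the CRCW PRAM model adopted in Section 1.2, I would either (i) appeal to the common/arbitrary concurrent-write resolution, where any successful write already suffices because duplicates are harmless for correctness (as highlighted by the trace of Algorithm 1), or (ii) give each processor a private slot in $ENUM$ indexed by its position in the computation tree and union at the end. Either way, no node's contribution is lost, so the per-node guarantee of the inductive argument transfers verbatim to the parallel setting and the theorem follows.
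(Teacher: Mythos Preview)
Your argument is correct, but it takes a different route from the paper's own proof. The paper does \emph{not} invoke Theorem~1; instead it argues directly by minimum counterexample: it supposes there is a spanning tree $T'$ of $H$ with the fewest accompanying-cycle edges that Algorithm~3 fails to output, rules out $T'=T$ (handled by the initialization of $ENUM$), and for $E(T')\cap E(C)=\{e_1,\dots,e_k\}$ observes that the chain of processor spawns $(T,e_1)\to\cdots\to(\,\cdot\,,e_k)$ eventually emits $T'$. In effect the paper re-runs the induction of Theorem~1 in compressed form rather than citing it.

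Your reduction via a shared computation tree is more modular: once you know Algorithm~1 is complete, you only need to check that Algorithm~3 visits every node of the same tree, which your depth induction handles cleanly. Two further points in your favor: you make explicit that the sequential/parallel versions differ only in scheduling, and you confront the concurrent-write issue on $ENUM$ under CRCW, which the paper's proof leaves implicit. The paper's approach buys self-containment (one need not have read the proof of Theorem~1), while yours buys economy and a cleaner separation between the combinatorial content (every spanning tree sits at some node) and the scheduling content (every node is executed).
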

\begin{proof}
We present the proof of correctness of Algorithm 3 using the technique proof by minimum counter example. Let $R$ represents the set of spanning trees of $H$ which are not enumerated by Algorithm 3 and $T^{'}\in R$ is a spanning tree with a minimum number of accompanying cycle edges.\\ 
{\em Case 1:} $T^{'}$ is not the characteristic tree of $H$ as Algorithm 3 generates the characteristic tree $T$ in the first step itself.\\
{\em Case 2:} $E(T^{'})\cap E(C)\neq\emptyset$.  Let $E(T^{'})\cap E(C)$=$\{e_1,\ldots,e_k\}$. Step 5 of Algorithm 3 assigns $(T,e_1)$ to a new distinct processor P which in turn calls Algorithm 4 with the parameters as $(T,e_1)$.  Algorithm 4 adds $e_1$ to $T$ and enumerates all possible spanning trees containing $e_{1}$. 
In the similar way, Algorithm 4 adds $e_2, \ldots, e_k$ recursively and enumerates all spanning trees which contain $\{e_1, \ldots, e_k\}$.  Therefore $T^{'}\notin R$ and $R$ is empty, and the theorem follows. \hfill \qed
\end{proof}
\subsection{Run-time Analysis}
In this section, we analyze the time taken by a single processor and we make use of $O((2pd)^p)$ processors.  Since the total number of spanning trees in a Halin graph is bounded by $O((2pd)^p)$, we use $O((2pd)^p)$  processors.  It is clear from the computational tree that the last but one iteration makes use of $O((2pd)^p)$ distinct processors.  Since there is a trade off between the number of processors and the run time, by using  $O((2pd)^p)$ processors, we achieve maximum parallelism in our enumeration approach.
The {\em parallel-list-spanning-trees(H)} is invoked on a processor $P_0$, which in turn invokes $p$ other new processors with the sub routine {\em parallel-recursive-list($T$,$e_i$)}, $1\leq i\leq p$.  Each newly activated processor $P_i, 1\le i\le p$ upon receiving the characteristic tree $T$ and an accompanying  cycle edge $e_i$, $1\leq i\leq p$ adds $e_i$ to $T$.  This creates a unique cycle $C^*$, which can be detected in linear time.  In the algorithm, $\sigma^*$ contains the possible edges that could be removed from $C^*$ to obtain new spanning trees.  Since there are $O(pd)$ edges in $\sigma^*$, $O(pd)$ spanning trees are spawned by each processor $P_i$.  Further, for every newly created spanning tree $T^{'},$ {\em parallel-recursive-list($T^{'},$ $e_j$)}, $i<j\leq p$ is spawned on a new processor.  Hence, at most $O(p^2 d)$ processors are activated by any processor $P_i$, each can be done in parallel.  The same analysis holds good for all newly spawned processors.  Therefore, we conclude that each of the processor needs $O(p^2 d)$ effort other than $P_0$, which incurs $O(p)$ time.  Therefore, the overall effort of each processor in this parallel approach is $O(p^2 d)$.
%\subsection{NC Complexity Class}
%In complexity theory, the class NC is the set of decision problems decidable in polylogarithmic time on a parallel computer with a polynomial number of processors. In other words, a problem is in NC if there exist constants $c$ and $k$ such that it can be solved in time $O(log^{c}n) $ using $O(n^{k})$ parallel processors \cite{13}. Our problem does not belongs to the class  NC as it uses $O((1+2d)^{p})$ processors and each processor takes $O(d*p)$ time.
%\newpage
\section{Parallel Algorithm: Listing all spanning trees in Halin graphs without repetitions}
In the previous section, we have shown that our sequential and parallel algorithm (Algorithms 1 and 3) lists all spanning trees of a Halin graph.  It is important to highlight that our algorithm enumerates with repetitions.  In this section, we shall present a parallel algorithm for listing all spanning trees in a Halin graph without repetitions.  Since our parallel algorithm is a natural extension of the associated sequential algorithm, the associated sequential algorithm also enumerates without repetitions.  For succint presentation, we present only the parallel version and avoid presenting the sequential version.  \\\\
We introduce a \emph{coloring of edges} during enumeration.  
Similar to Algorithm 3, Algorithm 5 has an ordering for the edges of $E(C)$ that are added  to $T$.  For each edge of $E(C)$ when it is added to the tree under consideration, a cycle is created and to obtain another spanning tree there exists more than one choice for edges that can be removed in Algorithm 3.  Therefore, the same spanning tree is created more than once by different ordering of edge deletions.  For example, in Figure \ref{trace}, the spanning tree $12a\_2$ is obtained by deleting edges $v_1v_3, v_1v_2$ in order, whereas the identical spanning tree $12b\_2$ is obtained by deleting edges $v_1v_2, v_1v_3$ in order.\\\\
Using our coloring scheme in Algorithm 5, we are ensuring the existence of only one ordering for edge deletions.  To explain the details of coloring, we shall see some notation as follows.   Since Halin graphs are planar, we work with the underlying plane embedding and with respect to this embedding, we order the edges in the accompanying cycle as $\sigma =(e_1,e_2,\ldots,e_p)$.  Let $V(C)=\{v_1,v_2,\ldots,v_p\}$ such that $e_i=v_iv_{i+1}$, $1\leq i\leq p-1$ and $e_{p}=v_pv_1$.  Note that the characteristic tree $T$ is a rooted tree (say rooted at vertex $u$).  During the course of our proposed algorithm, we add an edge $e_i=v_iv_{i+1}\in E(C)$ to a spanning tree $T'$, which creates a cycle $C^*$.  It is easy to see that there exist unique paths $P_{uv_i}$ and $P_{uv_{i+1}}$ in $T'$.  Now we shall define some more paths with respect to $T'$.  We identify a vertex $v$ in $P_{uv_i}\cap P_{uv_{i+1}}$ such that $P_{vv_i}$ and $P_{vv_{i+1}}$ are two vertex disjoint paths, $P_{vv_i}\sqsubseteq P_{uv_i}$ and $P_{vv_{i+1}}\sqsubseteq P_{uv_{i+1}}$. 
% 
% Just before adding the edge $e_i$, there exist a path $P_{v,v_i}$ and $P_{v,v_{i+1}}$ in $T$.
% We refer uncolored edges of cycle $C^*$ as 
We define $E^R_{e_i}=\{e:e\in E(P_{vv_{i+1}}) \}$, and $E^L_{e_i}=E(C^{*})\backslash (B\cup E^R_{e_i}\cup C)$ where $B$ is the set of colored edges in $C^*$.  Now it is easy to observe that $E^R_{e_i}$ and $ E^L_{e_i}$ forms a partition of $E(C^*)\backslash (B\cup C)$, and all edges in $E^R_{e_i}$ are uncolored.  With respect to an edge $f=xw\in E^R_{e_i}$, we define $E^{f}_{e_i}=E(P_{xv_{i+1}})$, where $P_{xv_{i+1}}\sqsubseteq P_{wv_{i+1}}$ and  $P_{wv_{i+1}}\sqsubseteq P_{vv_{i+1}}$.  \\\\
Coloring of edges is done when we delete an uncolored edge $f$ from $E(C^*)\backslash E(C)$ ( Recall that $C^*$ is created due to the addition of the edge $e_i$ ).  That is, if the uncolored edge $f\in E^L_{e_i}$, then all edges in $E^R_{e_i}$ are colored \emph{blue}.  On the other hand if the uncolored edge $f\in E^R_{e_i}$, then color all edges of $E^f_{e_i}$ blue. 
\begin{figure}
\centering
\includegraphics[scale=1]{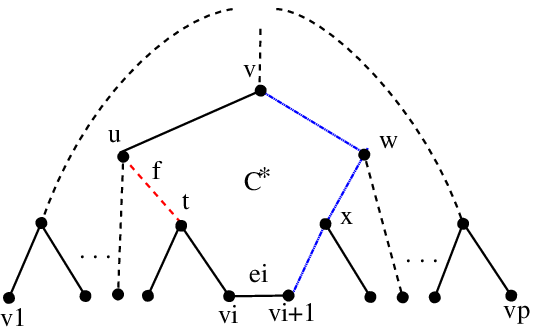}
\includegraphics[scale=1]{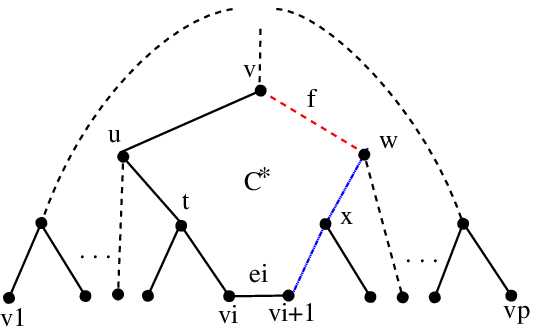}
\caption{An illustration on coloring of edges}
\label{figcoloring}
\end{figure}
% Note that in our approach, a colored edge is not colored again.  Note that there are no colored edges present in either $E^R_{e_i}$ or $E^L_{e_i}$.  \\\\
%
See Figure \ref{figcoloring}, where on the removal of the edge $f$, the edges $\{vw,wx,xv_{i+1}\}$ are colored blue in the first one, whereas the edges $\{wx,xv_{i+1}\}$ are colored in the later.  The algorithm for enumeration without repetitions is similar to the previous parallel algorithm, with some constraints on the choice of edge for deletion.  That is, when we add an edge $e_i$ to a spanning tree $T'$ which creates a cycle $C^*$ containing $e_i$, we delete an uncolored edge $f\in E^R_{e_i}\cup E^L_{e_i}$.  Deletion of such an edge $f$ creates a spanning tree $T''$ with some more edges colored blue, and colored edges are not removed in the further enumeration process.  We highlight the fact that the coloring information is kept in the memory local to the processor.  It is due to this coloring scheme that the enumeration is done without repetitions.  
\begin{algorithm}[!h]
\caption{Parallel algorithm to list all distinct spanning trees of a Halin graph\newline {\em parallel-list-distinct-spanning-trees($H$)}} \label{algnorep}
\begin{algorithmic}[1]
\STATEx{{\tt Input:} A Halin Graph $H$}
\STATEx{{\tt Output:} All distinct spanning trees of H. \newline {\tt /* The set $ENUM$ contains all spanning trees of $H$ and $T$ is the characteristic tree of $H$ rooted at $u$ */ }}
\STATE{ Initialize $ENUM=\{T\}$.}
\STATE{$ \sigma $=$(e_1,\ldots,e_p)$  be an ordering of edges in C.}
\STATE{\textbf{cobegin}}
\FOR{$i$ $=$ $1$ to $p$}
\STATE{Assign $(T,e_i)$ to a new distinct processor $P$.}
\STATE{$P$ calls {\tt parallel-recursive-distinct-list(T,$e_i$)}.}
\ENDFOR
\STATE{\textbf{coend}}
\end{algorithmic}
\end{algorithm}
%%%%%%%%%%%%%%%%%%%%%%%%%%%%%%%%%%% ENUMERATE WITHOUT REPETETION  %%%%%%%%%%%%%%%%%%%%%%%%%%%%%%%%%%%%%
\begin{algorithm}[!h]
\caption{Parallel recursive listing of distinct spanning trees in Halin graphs \newline {\em parallel-recursive-distinct-list($T^{'}$,$e_i$)}}
\begin{algorithmic}[1]
\STATE{$G^*\leftarrow$ $T^{'}$+ $e_i$.  Find $C^*$, the cycle in $G^*$ containing $e_i$.}   % , colour of $e^{'}_i$ equals blue.}
\STATE{$ \sigma^{*}_1 $=$(b_1,\ldots,b_j)$ be an ordering of edges in $E^L_{e_i}$ and \newline$\sigma^{*}_2$=$(b_{j+1},\ldots,b_k)$ be an ordering of edges in $E^R_{e_i}$.}
\STATE{\textbf{cobegin}}
\FOR{$m$ $=$ $1$ to $|C^*|$}
\IF{$b_m$ is uncolored}
\IF{$b_m\in\sigma^{*}_1$}
\STATE{$T^{*}\leftarrow $ $G^{*}-$ $b_m$}
\STATE{$\forall b_q\in \sigma^{*}_2$, color $b_q$ as blue.}
\ELSE
\STATE{$T^{*}\leftarrow $ $G^{*}-$ $b_m$}
\STATE{ $\forall b_q\in E^{b_m}_{e_i}$, color $b_q$ as blue.}
\ENDIF
\STATE{ENUM = ENUM $\cup$ $T^{*}$.}
\STATE{\textbf{cobegin}}
\FOR{$l$ $=$ $i+1$ to $p$} 
\STATE{Assign $(T^{*},e_l)$ to a new distinct processor.}
\STATE{ P calls {\tt parallel-recursive-distinct-list($T^{*}$,$e_l$)}.}
\ENDFOR
\STATE{\textbf{coend}}
\ENDIF
%\ENDIF
\ENDFOR
\STATE{\textbf{coend}}
\end{algorithmic}
\end{algorithm}
\begin{theorem} \label{thm4}
For a Halin Graph $H,$ Algorithm 5 generates all spanning trees of $H$.
\end{theorem}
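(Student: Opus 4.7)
The plan is to model the argument on the minimum counter-example proof used for Algorithm~3, but enriched with a careful bookkeeping of which edges become colored blue. I would let $R$ be the set of spanning trees of $H$ that Algorithm~5 fails to produce and pick some $T'\in R$ that minimises $|E(T')\cap E(C)|$. If $E(T')\cap E(C)=\emptyset$ then $T'=T$, which is enumerated in line~1 of Algorithm~5, so we may write $E(T')\cap E(C)=\{e_{i_1},\ldots,e_{i_k}\}$ with $i_1<i_2<\cdots<i_k$ and $k\ge 1$. The goal is to exhibit a single branch of the parallel computation tree of Algorithm~5 whose output is $T'$, contradicting $T'\in R$.

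The canonical branch is built recursively. Set $T_0:=T$ and, for $j=1,\ldots,k$, let $C^*_j$ be the unique cycle in $T_{j-1}+e_{i_j}$. Since $T'$ is a spanning tree containing $e_{i_j}$, exactly one edge $f_j$ of $C^*_j\setminus\{e_{i_j}\}$ lies in $T_{j-1}\setminus T'$. Setting $T_j:=(T_{j-1}+e_{i_j})-f_j$ and iterating gives $T_k=T'$. Thus Algorithm~5 produces $T'$ provided that, at every step $j$, the edge $f_j$ is uncolored at the instant step~$j$ is about to execute: the inner loop at line~4 of Algorithm~6 will then reach $f_j$ (which by construction belongs to $E^L_{e_{i_j}}\cup E^R_{e_{i_j}}$), delete it, and recurse on $(T_j,e_{i_{j+1}})$.

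The whole argument therefore reduces to the invariant \emph{$f_j$ is uncolored when step~$j$ starts}. This is where the planar embedding and the left/right decomposition of each $C^*$ must do the work, and it is the step I expect to require the most care. I would proceed by induction on $j$: the case $j=1$ is vacuous. For the inductive step I would catalogue the edges painted blue at each earlier step $j'<j$: by the coloring rule they all lie on the right arc of $C^*_{j'}$, namely on $E^R_{e_{i_{j'}}}$ (when $f_{j'}\in E^L_{e_{i_{j'}}}$) or on the sub-path $E^{f_{j'}}_{e_{i_{j'}}}$ (when $f_{j'}\in E^R_{e_{i_{j'}}}$). The key structural lemma to establish is that every such blue edge belongs to $E(T')$; indeed, if some blue edge $g$ were not in $E(T')$, then starting from $T'$ one could construct a deletion sequence strictly earlier than $(e_{i_1},\ldots,e_{i_k})$ in the lexicographic order induced by $\sigma$ that still produces $T'$, contradicting the minimality of $T'$ in $R$. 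Since $f_j\notin E(T')$ by choice, $f_j$ cannot be one of the blue edges and is therefore uncolored at step~$j$.

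Granted the invariant, the canonical branch runs to completion and outputs $T'$, contradicting $T'\in R$; hence $R=\emptyset$ and the theorem follows. The hard part is clearly the structural lemma that pins down every blue edge inside $E(T')$: unpacking how the deletion of $f_{j'}$, together with the planar left/right decomposition inherited from the underlying embedding, forces the right-arc edges to lie in every spanning tree consistent with the canonical enumeration. Once that lemma is in hand, the remainder of the proof is routine.
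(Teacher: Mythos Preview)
Your reduction to the invariant ``$f_j$ is uncolored at step $j$'' is the right shape, but the argument you sketch for it breaks down in two places.

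First, the claim that \emph{exactly one} edge of $C^*_j\setminus\{e_{i_j}\}$ lies in $T_{j-1}\setminus T'$ is false. Take the wheel $W_4$ (characteristic tree $K_{1,3}$ with centre $u$ and leaves $v_1,v_2,v_3$) and the target tree $T'=\{e_1,e_2,uv_3\}$. Adding $e_1$ to $T$ creates the cycle on $\{uv_1,uv_2,e_1\}$, and \emph{both} $uv_1$ and $uv_2$ are in $T\setminus T'$. So your canonical branch is not canonical: it depends on which $f_1$ you pick.

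Second, and consequently, the structural lemma ``every blue edge lies in $E(T')$'' is false for the wrong choice. In the same example, choosing $f_1=uv_1\in E^L_{e_1}$ colours $uv_2$ blue, yet $uv_2\notin E(T')$; at step~2 the only edge of $C^*_2$ outside $T'$ is the now-blue $uv_2$, and the branch is stuck. (Choosing $f_1=uv_2$ does work, but nothing in your argument forces that choice.) The justification you offer for the lemma is also not coherent: the minimality of $T'$ in $R$ is on $|E(T')\cap E(C)|$, which is an invariant of $T'$; producing $T'$ by a different deletion order does not touch that quantity, and the sequence $(e_{i_1},\ldots,e_{i_k})$ is determined by $T'$ itself, so there is nothing lexicographically earlier to appeal to.

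The paper's proof sidesteps exactly this difficulty. It does \emph{not} try to follow one fixed branch. It argues by induction on $k$: given $T_1$ with cycle edges $\{f_1,\ldots,f_k\}$, it picks any $m_i\in E(T)\setminus E(T_1)$ whose addition to $T_1$ creates a cycle through $f_k$, sets $T_2=T_1+m_i-f_k$, and invokes the hypothesis on $T_2$. If $m_i$ happens to be blue in the generated copy of $T_2$, the paper does an edge swap: the colouring rule guarantees an $m_h\in E(T)\setminus E(T_1)$ such that $T_3=T_2-m_i+m_h$ is another tree on $\{f_1,\ldots,f_{k-1}\}$ in which $m_h$ is uncoloured, and $T_1=T_3+f_k-m_h$. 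The point is that when the obvious edge is blocked you change the \emph{intermediate tree}, not the deletion order along a single branch. Your scheme can be repaired, but only by building in a rule for choosing $f_j$ (essentially ``rightmost on the right arc, else leftmost'') and proving that rule keeps the invariant; as written, the key lemma is simply false.
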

\begin{proof}  
Let $\{f_1,\ldots,f_k\}\subseteq E(C)$ be the set of accompanying cycle edges in the spanning trees belonging to $T_{f_1,\ldots,f_k}$.  From the computational tree, it is clear that there exist a node $T_{f_1,\ldots,f_k}$ for every $\{f_1,\ldots,f_k\}\subseteq E(C)$, $1\leq k\leq p-1$.  To show that Algorithm 5 generates all spanning trees of $H$, it is sufficient to show that $T_{f_1,\ldots,f_k}$ contains all spanning trees having accompanying cycle edges $\{f_1,\ldots,f_k\}$.  We prove this using mathematical induction on $k$\\
{\em Base case:} When $k=1$, $T_{f_1}$ contains all spanning trees obtained by adding $f_1\in E(C)$ to $T$ and removing each edge from $E^L_{f_1}$ and $E^R_{f_1}$. \\
{\em Hypothesis:} $T_{f_1,\ldots,f_{k-1}}$, on at most $k-1$, $k\geq 2$ accompanying cycle edges has all spanning trees containing edges $\{f_1,\ldots,f_{k-1}\}$.\\
{\em Induction Step:}  Let $T_{1}\in T_{f_1,\ldots,f_k}$ be an arbitrary spannning tree on $k$, $k\geq 2$ accompanying cycle edges.   Let us consider the set of edges $M=E(T)\backslash E(T_{1})$ and $M=\{m_1,\ldots,m_k\}$.  Observe $|M|=|\{f_1,\ldots,f_k\}|$.  Notice that there exist at least one edge, say $m_i\in M$ $1\leq i\leq k$ such that $m_i$ when added to $T_1$ creates a cycle $C^*$ where $f_k\in E(C^*)$.   Clearly, $T_{2}$ where $E(T_2)=(E(T_1) \cup \{m_i\}) \backslash \{f_{k}\}$ is a spanning tree on $k-1$ accompanying cycle edges.  
%
%  Also consider the cycle $C^*$ obtained by the addition of edge $f_k$ during the course of Algorithm 5.  Clearly there exist at least one edge, say $m_i\in M\cap E(C^*)$ $1\leq i\leq k$.  Clearly, $T_{2}$ where $E(T_2)=E(T_1)\backslash \{f_{k}\} \cup \{m_i\}$ is a spanning tree on $k-1$ accompanying cycle edges.  
 Therefore, by the induction hypothesis, $T_2$ is generated by Algorithm 5.  Note that, on the course of generation of $T_2$ by Algorithm \ref{algnorep}, the edge $m_i$ may be colored.  Therefore, we consider the following cases based on coloring of $m_i$.  \\
\textit{Case 1:} The edge $m_i$ is uncolored in $T_2$.  Then Algorithm 5 adds edge $f_k$, and removes $m_i$, which is the desired spanning tree $T_1$.\\
\textit{Case 2:} The edge $m_i$ is colored blue in $T_2$.  
% Note that if blue edge happens to be there, it is always seen in $E^{L}_{f_k}$, where $f_k$ is the newly added edge.  Also n
Note that whenever a new edge $f_k$ is added, there does not exist a colored edge in $E^{R}_{f_k}$.  Moreover, there exist an edge $f_j\in \{f_1,\ldots,f_{k-1}\}$ such that the addition of the edge $f_j$ and the removal of the edge $m_h\in M$, $m_h\neq m_i$ colors the edge $m_i$.  It is clear that either $m_h\in E^{L}_{f_j}$, $m_i\in E^{R}_{f_j}$, and $m_i$ is colored blue, or $m_h, m_i\in E^{R}_{f_j}$,   $m_i\in E^{m_h}_{f_j}$, and $m_i$ is colored blue.  Now there exist a spanning tree $T_3\in T_{f_1,\ldots,f_{k-1}}$ such that $E(T_3)=(E(T_2)\backslash \{m_i\}) \cup \{m_h\}$, and $m_h$ remains uncolored.  Observe that when $f_k$ is added to all spanning trees in  $T_{f_1,\ldots,f_{k-1}}$, $f_k$ is also added to $T_3$ and $m_h$ is removed to obtain $T_1$.  i.e., $E(T_1)=(E(T_3)\backslash \{m_h\}) \cup \{f_k\}$.  This completes the induction and the proof of Theorem \ref{thm4}.
\hfill \qed
\end{proof}   
There are no two spanning trees $T_1\in T_{e_i,\ldots,e_j}$, $T_2\in T_{e_k,\ldots,e_l}$, $\{e_i,\ldots,e_j\}\neq \{e_k,\ldots,e_l\}$ such that $T_1=T_2$ since they differ in accompanying cycle edges.  Therefore, if there exists duplicate spanning trees generated by Algorithm \ref{algnorep}, then it is found only within a set $T_{e_i,\ldots,e_j}$.  Theorem \ref{thm5} shows that such duplicate trees are not produced by Algorithm \ref{algnorep}. 
%%%% thereby enumeration of spanning trees are done without repetitions. 
% \\\textbf{Observation 1.}\\
\begin{obs}
\textit{ For a Halin graph $H=T\cup C$, let $e_i,e_j\in E(C)$ and $C_a, C_b$ be the cycles formed by the addition of edges $e_i, e_j$, respectively to $T$  then, $|E(C_a)\cap E(C_b)|\le1$.}
\end{obs}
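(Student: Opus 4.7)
The plan is to reduce the claim to a purely tree-structural statement about $T$. Write $C_a = \{e_i\} \cup E(P_{v_iv_{i+1}})$ and $C_b = \{e_j\} \cup E(P_{v_jv_{j+1}})$, where $P_{v_kv_{k+1}}$ denotes the unique $v_k$-$v_{k+1}$ path in $T$. Since $E(T) \cap E(C) = \emptyset$ and in the nontrivial case $e_i \neq e_j$, every shared edge must be a tree edge, so $E(C_a) \cap E(C_b) = E(P_{v_iv_{i+1}}) \cap E(P_{v_jv_{j+1}})$. It therefore suffices to show that any two distinct ``consecutive-leaf paths'' in $T$ share at most one edge.

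The key idea exploits the leaf bipartition induced by each tree edge. Every $f \in E(T)$ is a bridge, so $T - f$ has two components whose leaf sets partition the leaves of $T$. Because the Halin graph is planar and the leaves appear on $C$ in their cyclic planar order, each of these two leaf sets forms a contiguous arc along $C$. Consequently $f$ lies on the path $P_{v_kv_{k+1}}$ exactly when $\{v_k, v_{k+1}\}$ straddles the two arcs, which happens for precisely two values of $k$, namely the ``boundary pairs'' at the two endpoints of the arc. Thus each tree edge is used by exactly two consecutive-leaf paths. If $P_{v_iv_{i+1}}$ and $P_{v_jv_{j+1}}$ were to share two tree edges $f, f'$, then $f$ and $f'$ would admit the same pair of boundary pairs and hence would induce identical leaf bipartitions of $T$.

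The main obstacle is ruling out this last possibility, and this is where the defining hypothesis that $T$ has no degree-$2$ vertex enters essentially. Root $T$ arbitrarily. If two distinct edges $f, f'$ produced the same leaf bipartition, one may assume the child endpoint of $f'$ is a proper ancestor of the child endpoint of $f$. Each internal vertex strictly between them on the descending tree path could have no ``side branch,'' since any such branch would contribute a leaf lying on the ``child side'' of one of $f, f'$ but not the other, breaking the agreement of bipartitions. Absence of side branches forces degree $2$ at such an intermediate vertex, contradicting the hypothesis on $T$. Combining these pieces yields $|E(C_a) \cap E(C_b)| \leq 1$, as required.
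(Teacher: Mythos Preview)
The paper states this as an Observation without supplying any argument, so there is nothing in the paper to compare your proof against. Your approach is correct and self-contained. The reduction to the two leaf--to--leaf paths in $T$ is immediate; the contiguity of the two leaf sets along $C$ after deleting a tree edge is the standard Jordan-curve consequence of the planar Halin embedding; and the ``exactly two boundary pairs'' count then follows because both leaf sets are nonempty arcs. The one place that could use a sentence is the final step: before assuming that the child endpoint of $f'$ is a proper ancestor of that of $f$, note that if $f$ and $f'$ are \emph{not} in an ancestor--descendant relation then the subtrees below their child endpoints are vertex-disjoint and each contains at least one leaf, so their leaf bipartitions already differ. With that one line added, the degree-$2$ contradiction goes through exactly as you wrote, and the no-degree-two hypothesis on $T$ is used precisely where it must be.
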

%  For $\{f_1,f_2,\ldots,f_k\}\subset E(C)$, consider a spanning tree $T^{'}\in$ $T_{f_1,f_2,\ldots ,f_k}$ and $M=E(T)\backslash E(T^{'})$ such that $m_i\in M$ be the edge removed by the addition of $f_i$, $1\leq i\leq k$.  We define a function $\mathfrak{h}$ such that $\mathfrak{h}$ maps $f_i$ to $m_i$, i.e., $\mathfrak{h}(f_i)=m_i$, $1\leq i\leq k$.  Notice that $\mathfrak{h}(f_i)$ is the edge removed by the addition of $f_i$ and $E(T^{'})=E(T)\cup dom(\mathfrak{h})\backslash codom(\mathfrak{h})$ where $dom(\mathfrak{h})$, and $codom(\mathfrak{h})$ denotes the domain, and co-domain, respectively of the function $\mathfrak{h}$.  We say $\mathfrak{h}$ is the function \textit{corresponding }to the spanning tree $T^{'}$.  Clearly, the function is a bijective function.  Following theorem proves that for a spanning tree $T^{'}$, the corresponding function $\mathfrak{h}$ is unique.  i.e. there does not exist a function $\mathfrak{h}^{'}$ such that $\mathfrak{h}^{'}$ also gives the spanning tree $T^{'}$. 
\begin{theorem} \label{thm5}
For $\{f_1,\ldots,f_k\}\subset E(C)$, let $T_{f_1,\ldots,f_k}$ be a set of spanning trees generated by Algorithm 5 on an input Halin graph $H=T\cup C$.  There does not exist spanning trees $T_1,T_2\in$ $T_{f_1,\ldots ,f_k}$ such that $T_1=T_2$.  That is, there are no duplicate spanning trees in $T_{f_1,\ldots,f_k}$ 
\end{theorem}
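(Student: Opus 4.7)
The plan is to argue by minimum counter-example. Suppose for contradiction that two distinct execution traces $H_1, H_2$ of Algorithm 5 generate the same spanning tree $T^{*} \in T_{f_1,\ldots,f_k}$. Both traces add the accompanying cycle edges $f_1,\ldots,f_k$ in the same (recursion-fixed) order, so they must disagree on some edge-deletion. Let $j$ be the smallest step at which the two traces differ: before step $j$ the same set of tree edges has been deleted, and at step $j$, $H_1$ deletes an uncolored edge $m_a$ while $H_2$ deletes a different uncolored edge $m_b$ from the cycle $C^{*}$ created by adding $f_j$ to the common predecessor tree $T'$. Since both traces yield the same final tree, the multisets of deleted edges agree, so $\{m_a, m_b\} \subseteq M$ where $M = E(T) \setminus E(T^{*})$; in particular each trace must delete the other's choice at some later step.

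The argument then proceeds by case analysis on where $m_a, m_b$ lie in the partition $E^L_{f_j} \cup E^R_{f_j}$ of the uncolored non-$C$ edges of $C^{*}$. \textbf{Case 1 (exactly one of them is in $E^L_{f_j}$).} Without loss of generality $m_a \in E^L_{f_j}$ and $m_b \in E^R_{f_j}$. The deletion rule for an $E^L$-edge paints every edge of $E^R_{f_j}$ blue, so $m_b$ becomes colored in $H_1$ immediately after step $j$. Since Algorithm 5 never deletes a colored edge, $H_1$ cannot delete $m_b$ at any later step, contradicting $m_b \in M$. \textbf{Case 2 (both in $E^R_{f_j}$).} Order the edges of $E^R_{f_j}=E(P_{vv_{j+1}})$ from $v$ toward $v_{j+1}$ and assume without loss of generality that $m_a$ lies closer to $v$; then $m_b \in E^{m_a}_{f_j}$. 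The rule for an $E^R$-edge paints $E^{m_a}_{f_j}$ blue, coloring $m_b$ and yielding the same contradiction in $H_1$.

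\textbf{Case 3 (both in $E^L_{f_j}$)} is where the plan encounters its main obstacle. Here step $j$ paints only $E^R_{f_j}$, leaving both $m_a$ and $m_b$ uncolored in both traces; the two post-step trees $T^{(1)}_j$ and $T^{(2)}_j$ therefore differ exactly by the swap $m_a \leftrightarrow m_b$ on the tree path $P_{vv_j}$. To close this case I would invoke a structural property of Halin graphs: every internal vertex of the path $P_{vv_j}$ (present because $|E^L_{f_j}| \ge 2$) is an internal vertex of $T$, hence is not incident to any accompanying-cycle edge. Using this together with the invariant that the symmetric difference $E(T^{(1)}_\ell) \triangle E(T^{(2)}_\ell)$ still contains $\{m_a, m_b\}$ until one of them is removed, the goal is to show that the cycle in which $H_1$ must eventually delete $m_b$ is forced to contain a previously blue-colored edge of $E^R_{f_j}$, and symmetrically for $H_2$ and $m_a$, producing the required contradiction. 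This is the hardest step: it requires tracking the evolution of the cycles $C^{*}_\ell$ for $\ell > j$ under the two histories and arguing that the step-$j$ coloring of $E^R_{f_j}$ propagates in a way that precludes a ``reverse swap'' at any later step. Once all three cases are handled, no minimum counter-example can exist, so Algorithm 5 generates no duplicates within $T_{f_1,\ldots,f_k}$.
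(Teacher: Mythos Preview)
Your Cases 1 and 2 are correct and coincide with the paper's two cases (the paper numbers them in the opposite order). The gap is your Case 3, which you explicitly leave unfinished; moreover, the goal you set yourself there would not close the case even if you reached it. You aim to show that the later cycle in which $H_1$ must delete $m_b$ ``is forced to contain a previously blue-colored edge of $E^R_{f_j}$,'' but Algorithm~5 has no prohibition against a cycle $C^*_\ell$ containing colored edges---it only forbids \emph{deleting} a colored edge. So the presence of a blue edge on $C^*_\ell$ is not a contradiction; you would need $m_b$ itself to be blue, and nothing in your sketch shows that.

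The paper avoids this entire case by a short structural exclusion that you missed: it argues that $\{m_a,m_b\}\not\subset E^L_{f_j}$ because no cycle created by a later addition $f_\ell$ ($\ell>j$) can contain an edge of $E^L_{f_j}$. The point is that each of $m_a,m_b$ must be deleted by the \emph{other} trace at some step $\ell>j$, hence must lie on the corresponding $C^*_\ell$; if both lay in $E^L_{f_j}$ this would be impossible. Intuitively, $E^L_{f_j}$-edges sit on the $v$--$v_j$ side of $C^*$; once $f_j$ has been added and an $E^L$-edge removed, the tree path from $v_j$ to the root reroutes through $f_j$ and the $E^R_{f_j}$ side, so subsequent fundamental cycles do not revisit the $E^L_{f_j}$ segment. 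Replacing your open-ended Case~3 with this exclusion reduces the proof to exactly your Cases 1 and 2.
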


%\begin{theorem}
%For $\{f_1,f_2,\ldots,f_k\}\subset E(C)$, let $T_{f_1,f_2,\ldots,f_k}$ be a set of spanning trees generated by Algorithm 5 on an input Halin graph $H=T\cup C$.  There does not exist spanning trees $T_1,T_2\in$ $T_{f_1,f_2,\ldots ,f_k}$ such that $T_1=T_2$.  i.e., there are no duplicate spanning trees in $T_{e_i,\ldots ,e_j}$ 
%\end{theorem}
\begin{proof}
Let $T_1\in$ $T_{f_1,\ldots,f_k}$  be a spanning tree and $M=E(T)\backslash E(T_1)$.  Assume for a contradiction that there exist a duplicate spanning tree $T_2$, such that $E(T_2)=E(T_1)$ enumerated by Algorithm \ref{algnorep}.  Since the algorithm adds edges $\{f_1,\ldots,f_k\}$ one after the other in order, edges in $M$ are removed in different order say $(a_1,\ldots,a_{k})$ to obtain $T_1$ and $(b_1,\ldots,b_{k})$ to obtain $T_2$ where $M=\{a_1,\ldots,a_k\}=\{b_1,\ldots,b_k\}$.  
% Let $m_i\in M$ be the edge removed from the cycle $C_a$ containing $f_i$, $1\leq i\leq k$ during the course of creation of $T_1$.  
Consider the least indexed edge $f_j\in \{f_1,\ldots,f_k\}$ such that $a_j\neq b_j$. % $|E(C_b)\cap M|>1$.
Let $C^*$ be the cycle formed by the addition of the edge $f_j$.  Clearly, $a_j,b_j\in E(C^*)\cap M$ such that $a_j, b_j$ are removed first from $E(C^*)$ during the course of creation of  $T_1, T_2,$ respectively.  Note that since $a_j\in M$, $a_j\in \{b_{j+1},\ldots,b_k\}$.  That is,  $b_l=a_j,l>j$ is removed at a later point for obtaining $T_2$.  Now, note that $\{a_j,b_j\}\not \subset E^{L}_{f_j}$ since none of the cycles created by any of the edges in $\{a_{j+1},\ldots,a_k\}$ and $\{b_{j+1},\ldots,b_k\}$ could contain any of the edges $a_j$ and $b_j$.  Therefore, we come across the following two cases, and an illustration is given in Figure \ref{figthm5}.\\
\textit{Case 1:} $\{a_j,b_j\}\subset E^{R}_{f_j}$.  If $b_j\not \in E^{a_j}_{f_j}$, then for the spanning tree, $T_2$ while adding the edge $f_j$, the edge $b_j$ is removed, and $a_j$ is colored blue.  This contradicts the fact that $a_j\in E(T)\backslash E(T_2)$.  Now if $b_j\in E^{a_j}_{f_j}$, then for creating $T_1$, removal of the edge $a_j$ colors the edge $b_j$, thereby $b_j\notin E(T)\backslash E(T_1)$, a contradiction. 
\begin{figure}
\centering
\includegraphics[scale=1]{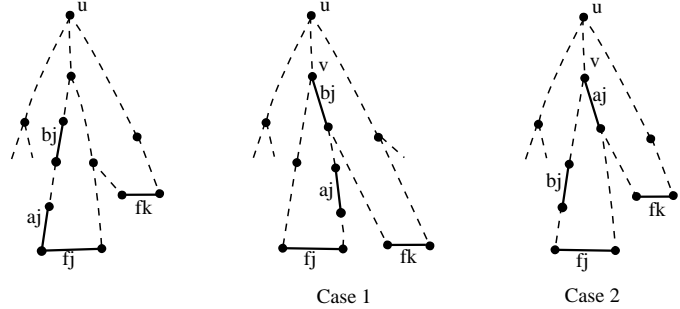}
\caption{An illustration for the proof of Theorem \ref{thm5}}
\label{figthm5}
\end{figure}\\
\noindent \textit{Case 2:}  $a_j\in E^{R}_{f_j}$ and  $b_j\in E^{L}_{f_j}$.  
% For the spanning tree $T_1$ on adding the edge $f_j$, and removing the edge $a_j$, the edge $m_l$ is not colored on removal of $m_j$.  Notice that 
For the spanning tree $T_2$ on adding the edge $f_j$, and removing the edge $b_j$, the edge $a_j$ is colored blue.  This contradicts the fact that $a_j\in E(T)\backslash E(T_2)$.  It is easy to see the symmetric case when $b_j\in E^{R}_{f_j}$ and  $a_j\in E^{L}_{f_j}$.  
% Observe that if $m_l\in E^{R}_{f_j}$ and  $m_j\in E^{L}_{f_j}$, then for creating $T_1$, removal of the edge $m_j$ colors the edge $m_l$, thereby $m_l$ becomes non-removable.  This contradicts the fact that $m_l\in M$.  
Thus there exist a unique removable edge $a_i\in M$ corresponding to every added edge $f_i\in E(C^*)$, $1\leq i\leq k$ to obtain a spanning tree in Algorithm \ref{algnorep}.  This completes the proof of Theorem \ref{thm5}. \hfill \qed
\end{proof}
\begin{theorem}
For a Halin graph $H$, Algorithm \ref{algnorep} enumerates all spanning trees without repetitions.
\end{theorem}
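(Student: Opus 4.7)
The plan is to obtain this theorem as a direct corollary of Theorem \ref{thm4} and Theorem \ref{thm5}, together with the observation made in the paragraph preceding Theorem \ref{thm5}. Since the heavy lifting has already been done, the proof will essentially be a clean assembly of these three facts rather than a fresh argument.

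First I would invoke Theorem \ref{thm4} to guarantee completeness: for every spanning tree $T'$ of $H$ with $E(T')\cap E(C)=\{f_1,\ldots,f_k\}$, the set $T_{f_1,\ldots,f_k}$ produced by Algorithm \ref{algnorep} contains $T'$. Thus every spanning tree of $H$ appears in at least one node of the computational tree. This handles the ``enumerates all'' portion of the statement.

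Next, to rule out repetitions, I would partition the possible duplicates into two kinds. Two spanning trees generated at nodes $T_{e_{i_1},\ldots,e_{i_j}}$ and $T_{e_{k_1},\ldots,e_{k_l}}$ with $\{e_{i_1},\ldots,e_{i_j}\}\neq\{e_{k_1},\ldots,e_{k_l}\}$ cannot be equal, since their accompanying-cycle-edge sets differ (this is precisely the remark made just before Theorem \ref{thm5}). So any potential repetition must occur inside a single set $T_{f_1,\ldots,f_k}$; but Theorem \ref{thm5} rules this out. Combining the two cases gives uniqueness across the whole enumeration.

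I expect no substantial obstacle here, since both ingredients are already proved; the only small care required is to make clear that the computational tree of Algorithm \ref{algnorep} partitions its output according to which edges of $E(C)$ appear in the generated spanning tree, so that the two-case split above is exhaustive. With that observation stated explicitly, the proof reduces to a one-line citation of Theorem \ref{thm4} for completeness and Theorem \ref{thm5} together with the preceding observation for uniqueness, and the result follows.
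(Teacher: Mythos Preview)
Your proposal is correct and matches the paper's own proof, which simply states ``Follows from Theorem \ref{thm4} and Theorem \ref{thm5}.'' You have merely unpacked this one-line citation by making explicit the role of the observation preceding Theorem \ref{thm5} (that trees in distinct nodes $T_{f_1,\ldots,f_k}$ differ in their accompanying-cycle edges), which is exactly the intended reading.
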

\begin{proof}
Follows from Theorem \ref{thm4} and Theorem \ref{thm5}. \hfill \qed
\end{proof}
\subsection{Run-time Analysis}
Algorithm \ref{algnorep} is invoked on a processor $P_0$, which invokes $p$ other new processors with the sub routine {\em  parallel-recursive-distinct-list($T$,$e_i$)}, $1\leq i\leq p$.  Each newly activated processor $P_i$  when adds $e_i$ to $T$, a cycle $C^*$ is formed, which is detected in $O(pd)$ time.  There are at most $O(pd)$ uncolored edges in $\sigma^*$, which can be removed one after the other from $C^*$ to obtain new  spanning trees.  At most $d$ edges in $E^R_{e_i}$ are colored for each of the newly obtained spanning trees.  Therefore, $P_i$ incurs $O(pd^2)$ effort in creating $O(pd)$ spanning trees.  For each newly created spanning tree $T^{'},$ {\em  parallel-recursive-distinct-list($T^{'},$ $e_j$)}, $i<j\leq p$ is spawned on a new processor.  $P_i$ activates, at most $O(p^2d)$ processors.  Therefore, the overall effort of each processor in this parallel approach is $O(pd^2+p^2d)$.  
\section{Conclusions and Future Work}
We have presented a sequential and parallel algorithm to list all spanning trees in Halin Graphs without repetitions.  We have also presented a bound on the number of spanning trees generated by our algorithm.  Interesting problems for further research are to enumerate all minimal vertex separators and maximal independent sets in Halin graphs and other special graph classes.   
\bibliographystyle{splncs1}
\bibliography{halinref}
%
%\end{spacing}
\end{document}